\newtheorem{proposition}{Proposition}
\newcommand*{\rom}[1]{\expandafter\@slowromancap\romannumeral #1@}
\newcommand{\mb}[1]{\mathbf{#1}}
\newcommand{\mc}[1]{\mathcal{#1}}
\newcommand{\sinr}{\Gamma}
\newcommand{\lver}{\left\vert}
\newcommand{\rver}{\right\vert}
\newcommand{\lp}{\left(}
\newcommand{\rp}{\right)}
\newcommand{\ls}{\left[}
\newcommand{\rs}{\right]}
\newcommand{\card}[1]{\lver #1\rver}
\newcommand{\bmat}{\begin{bmatrix}}
\newcommand{\emat}{\end{bmatrix}}
\newcommand{\MAT}[1]{\mb{#1}}
\newcommand{\VEC}[1]{\mb{#1}}
\newcommand{\nextstate}{\VEC{s}'}
\newcommand{\nextaction}{\VEC{a}'}
\newcommand{\thetaeval}{\VEC{\theta}}
\newcommand{\thetatarget}{\VEC{\theta}^{-}}
\newcommand{\thetapolicy}{\VEC{\omega}}
\newcommand{\thetaq}{\VEC{\theta}}
\newcommand{\policybase}{\VEC{\bar{\pi}}}
\newcommand{\policynew}{\VEC{\pi}^\dagger}
\newcommand{\sbscvec}[1]{\VEC{z}_{#1}}
\newcommand{\mbsvvec}[1]{\VEC{x}_{#1}}
\newcommand{\state}[1]{\VEC{s}_{#1}}
\newcommand{\action}[1]{a_{#1}}
\newcommand{\actionv}[1]{\VEC{a}_{#1}}
\newcommand{\mbsmtx}{\MAT{X}}
\newcommand{\sbsmtx}{\MAT{Z}}
\newcommand{\Nsbs}{L} 
\newcommand{\Nsbsth}{l}
\newcommand{\Nsb}{M} 
\newcommand{\Nsbth}{m}
\newcommand{\Nsteps}{T}
\newcommand{\mbs}{b_0} 
\newcommand{\sbs}{b_l}
\newcommand{\mue}{u_0}
\newcommand{\sue}{u_l}
\newcommand{\pl}[2]{\alpha_{#1,#2}}
\newcommand{\chgain}[3]{g_{#1,#2}^{#3}} 
\newcommand{\ch}[3]{h_{#1,#2}^{#3}} 
\newcommand{\rate}[1]{C_{#1}}
\newcommand{\mbsv}[2]{x_{#1}^{#2}} 
\newcommand{\sbsc}[2]{z_{#1}^{#2}} 
\newcommand{\reward}[1]{r_{#1}}
\newacronym{aoa}{AoA}{angles of arrival}
\newacronym{awgn}{AWGN}{additive white Gaussian noise}
\newacronym{bbu}{BBU}{baseband unit}
\newacronym{bs}{BS}{base station}
\newacronym{cdf}{CDF}{cumulative distribution function}
\newacronym{cee}{CEE}{channel estimation error}
\newacronym{cp}{CP}{cyclic prefix}
\newacronym{cpri}{CPRI}{common public radio interface}
\newacronym{cran}{C-RAN}{centralized radio access network}
\newacronym{csi}{CSI}{channel state information}
\newacronym{dft}{DFT}{discrete Fourier transform}
\newacronym{ds}{DS}{direction-selection}
\newacronym{du}{DU}{digital unit}
\newacronym{dl}{DL}{downlink}
\newacronym{drl}{DRL}{deep reinforcement learning}
\newacronym{dnn}{DNN}{deep neural network}
\newacronym{dqn}{DQN}{deep Q-network}
\newacronym{ddqn}{DDQN}{double deep Q-network}
\newacronym{ddpg}{DDPG}{deep deterministic policy gradient}
\newacronym{fft}{FFT}{fast Fourier transform}
\newacronym{fh}{FH}{Fronthaul}
\newacronym{phy}{PHY}{physical layer}
\newacronym{fd}{FD}{full-duplex}
\newacronym{hetnet}{HetNeT}{heterogeneous cellular networks}
\newacronym{irc}{IRC}{interference rejection combining}
\newacronym{ici}{ICI}{inter-cell interference}
\newacronym{iab}{IAB}{integrated access and backhaul}
\newacronym{lls}{LLS}{lower-layer split}
\newacronym{los}{LoS}{line-of-sight}
\newacronym{ls}{LS}{least square}
\newacronym{lte}{LTE}{Long Term Evolution}
\newacronym{mimo}{MIMO}{multiple-input-multiple-output}
\newacronym{mmse}{MMSE}{minimum mean-square error}
\newacronym{mrc}{MRC}{maximum ratio combining}
\newacronym{mumimo}{MU-MIMO}{multiuser-MIMO}
\newacronym{mbs}{DBS}{donor base station}
\newacronym{mt}{MT}{mobile termination}
\newacronym{mmwave}{mmWave}{millimeter wave}
\newacronym{mdp}{MDP}{Markov Decision Process}
\newacronym{nr}{NR}{New Radio}
\newacronym{nn}{NN}{Neural Network}
\newacronym{pca}{PCA}{Principal Component Analysis}
\newacronym{qos}{QoS}{quality of Service}
\newacronym{ran}{RAN}{radio access network}
\newacronym{rf}{RF}{radio frequency}
\newacronym{rru}{RRU}{remote radio unit}
\newacronym{rl}{RL}{reinforcement learning}
\newacronym{snr}{SNR}{signal-to-noise-power ratio}
\newacronym{sir}{SIR}{signal-to-interference-power ratio}
\newacronym{sinr}{SINR}{signal-to-interference-and-noise ratio}
\newacronym{svd}{SVD}{singular value decomposition}
\newacronym{si}{SI}{study item}
\newacronym{tco}{TCO}{total-cost-of-ownership}
\newacronym{td}{TD}{Temporal Difference}
\newacronym{ue}{UE}{user equipment}
\newacronym{ula}{ULA}{uniform linear array}
\newacronym{ul}{UL}{uplink}
\newacronym{zf}{ZF}{zero-forcing}
\def\BibTeX{{\rm B\kern-.05em{\sc i\kern-.025em b}\kern-.08em
    T\kern-.1667em\lower.7ex\hbox{E}\kern-.125emX}}
\begin{document}

\title{Deep Reinforcement Learning Based Spectrum Allocation in Integrated Access and Backhaul Networks}

 \author{Wanlu~Lei,~\IEEEmembership{Student~Member,~IEEE,}
 Yu~Ye,~\IEEEmembership{Student~Member,~IEEE,}
 	Ming~Xiao,~\IEEEmembership{Senior~Member,~IEEE,}
	
 	\thanks{ Wanlu Lei, Yu Ye and Ming Xiao are with the School of Electrical Engineering and Computer Science, Royal Institute of Technology (KTH), Stockholm, Sweden (email: wllei@kth.se, yu9@kth.se, mingx@kth.se).
 	
 	Wanlu Lei is with Ericsson AB, Stockholm, Sweden.}
 }
  
 \maketitle

\begin{abstract}
We develop a framework based on \gls{drl} to solve the spectrum allocation problem in the emerging \gls{iab} architecture with large scale deployment and dynamic environment. The available spectrum is divided into several orthogonal sub-channels, and the \gls{mbs} and all \gls{iab} nodes have the same spectrum resource for allocation, where a \gls{mbs} utilizes those sub-channels for access links of associated \gls{ue} as well as for backhaul links of associated \gls{iab} nodes, and an \gls{iab} node can utilize all for its associated \gls{ue}s. This is one of key features in which 5G differs from traditional settings where the backhaul networks were designed independently from the access networks. With the goal of maximizing the sum log-rate of all \gls{ue} groups, we formulate the spectrum allocation problem into a mix-integer and non-linear programming. However, it is intractable to find an optimal solution especially when the IAB network is large and time-varying. To tackle this problem, we propose to use the latest DRL method by integrating an actor-critic spectrum allocation (ACSA) scheme and \gls{dnn} to achieve real-time spectrum allocation in different scenarios. The proposed methods are evaluated through numerical simulations and show promising results compared with some baseline allocation policies. 

\end{abstract}
\glsresetall

\begin{IEEEkeywords}
integrated access and backhaul, spectrum allocation, deep reinforcement learning 
\end{IEEEkeywords}

\section{Introduction}
The global traffic data rate is estimated to continuously increase with an annual rate of 30\% between 2018 to 2024 due to the exponential increase in the number of mobile broadband subscribers such as smartphones and tablets \cite{DBLP:journals/corr/abs-1906-09298}. An effective solution to handle these relentless demands is to increase the network capacity through network densification and frequency reuse. However, the acquisition cost of site and fiber deployments is one of the biggest challenges in implementing densification at a large scale in the networks. According to \cite{willebrand2001fiber}, the fiber optic link costs about 85\% of total installation and trenching. To handle this challenge, the investigation of self-backhauling, {\color{black}{where same spectrum resources can be shared by both the access and the backhaul transmissions,}} has attracted much attention due to its flexibility and cost efficiency compared with fiber-based wired solutions. 
Recently, the emergence of \gls{iab}, {\color{black}{which is also refereed to as self-backhauling networks in the literature,}} has gained much attention, and it becomes one of the key scenarios in 5G and future \gls{mmwave} networks. Actually, 3GPP Release 15 also involved a new \gls{si} to analyze the feasibility of \gls{iab} deployment for \gls{nr}. Unlike traditional wireless backhauling, \gls{iab} nodes can use the same radio resource for access and backhaul links.
Besides, the dense deployment of \gls{iab} nodes together with the aggressive frequency reuse brings about a direct consequence of challenges in providing reliable backhaul transmission and mitigating severe co-tier and cross-tier interference.

The emerging architecture of \gls{iab} has motivated lots of recent research activities from both academia and industries. References \cite{saha2018bandwidth,saha2018integrated} analyze a two-tier \gls{hetnet} with \gls{iab} where a \gls{mbs} provides \gls{mmwave} backhaul to the \gls{iab} nodes, and concluded that due to the the bottleneck of backhaul links between \gls{mbs} and \gls{iab} nodes, it is difficult to improve user rates in an \gls{iab} settings. In \cite{polese2018end}, it evaluates the end-to-end performance of \gls{iab} architecture through system-level full-stack simulation and shows that \gls{iab} can be a viable solution to effectively relay traffic in very congested networks. The investigation in finding optimal routing and scheduling strategies has also been studied in \cite{polese2018distributed}. As many prior works consider \gls{iab} in \gls{mmwave} scenarios due to the advantage of large bandwidth offered by \gls{mmwave}, none has considered the impact of limited backhaul capacity together with spectrum allocation problem in the sub-6 Ghz setup of \gls{iab}, to the best of our knowledge. 

\gls{iab} settings in sub-6 Ghz frequency actually plays a significant role in many practical scenarios in 5G and beyond networks. Such as in public safety applications, where a network extension is required by adding more \gls{iab} nodes to reach higher coverage, or where a temporary \gls{iab} network needs to be deployed collaboratively with existing \gls{mbs} in certain emergency situations. The spectrum allocation schemes in these scenarios are too divergent from the conventional setup and make it difficult to apply any off-the-shelf methods. Some tailored framework and algorithms for spectrum allocation need to be investigated to adapt to the system dynamics and fulfil real time requirements. 

The spectrum allocation problem has been extensively studied in \cite{chandrasekhar2009spectrum,zhuang2018large,zhou2020blockchain,yang2020heterogeneous} and is usually solved by formulating as an optimization problem. Nevertheless, most of these methods need accurate or complete information about the network, such as \gls{csi} or are achieved with very expensive computational complexity. Besides, network dynamics are seldom addressed and many solutions to the optimization problem are solved in only a snapshot of the network or valid in a specific network architecture. These model-dependent schemes are indeed inappropriate for complex and highly time-varying scenarios. In an established network environment, \gls{bs} employs static spectrum allocation strategy, such as full-reuse or fixed orthogonal allocation methods to ease the system computation and implementation complexities. However, ultra dense \gls{iab} environment makes full-spectrum reuse or other static schemes less efficient due to the severe co-tier interference and cross-tier interference introduced by neighboring \gls{bs}s, as shown in Fig.~\ref{fig:iab}. The rate of \gls{ue} associated with \gls{iab} is determined by the minimum rate of backhaul link and access link, which makes the final rate sensitive to the spectrum allocation strategies. When more \gls{iab} nodes are deployed and more spectrum resource becomes available in the IAB network, the solution space for spectrum allocation increases exponentially. To address this issue, we will exploit latest findings in \gls{rl} and \gls{dnn} to develop a scalable and model-free framework to solve this problem. The framework is expected to have capability to effectively adapt with \gls{iab} network topology changes, large-scale sizes and different real time system requirements. We firstly consider a centralized approach in this work, and leave the distributed approach to future work. 

{\color{black}Many exciting news in Artificial Intelligence(AI) have just happened in recent years\cite{zhang20196g}.} Learning-based method in playing Atari 2600 video games outperforms all previous approaches and surpasses human expert, Alpha Go \cite{silver2016mastering} beats the best human players in the game of Go. Very soon the extended algorithm AlphaGo Zero beats AlphaGo by $100-0$ without supervised learning on human knowledge. The magic behind these happenings is the well-known \gls{rl} algorithm. The development of Q-learning is a big breakout in the early days of \gls{rl}\cite{watkins1992q}. The classical Q-learning is demonstrated to perform well in small-size models but becomes less efficient when the network is scaling up. \gls{rl} combined with the state-of-the-art technique \gls{dnn} addresses the problem and its capability of handling large state spaces to provide a good approximation of Q-value inspires remarkable upsurge of research works in wireless communication problems. The deep Q-learning based power allocation problem has been considered in \cite{ghadimi2017reinforcement} and \cite{zhang2018power}. {\color{black}In \cite{zhang2019deep}, it exploits DRL and proposes an intelligent modulation and coding scheme (MCS) selection algorithm for the primary transmission.} A similar approach has also been proposed in \cite{zhuang2018large} for dynamic spectrum access in wireless network. However, directly applying \gls{dqn} to the spectrum allocation problem is not feasible, because the action space can be very large with increasing of the network size and available spectrum, and it consumes much longer time for convergence. Recent work from DeepMind has introduced an actor-critic method that embraces \gls{dnn} to guide decision making \cite{lillicrap2015continuous} for continuous control. This model can be applied to large discrete action space \cite{dulac2015deep} to tasks having up to one million actions. 

Based on our observations, it can be concluded that \gls{drl} is a promising technique for wireless communication systems in the future. On one hand, the large amount of data from intelligent radio can be used for training and predicting purpose, and hence improving system performance by better decision making (spectrum mapping and allocation for \gls{ue}). On the other hand, it is able to handle highly dynamic time-variant system with different network setups and \gls{ue} demands. Moreover, another appealing advantage in \gls{rl} is the flexibility in reward function design. With proper designed reward at each step, the system performance can be improved according to different desired objective. The interaction between the agent (can be a center located controller or distributed \gls{ue}) and the environment are beneficial in the learning process such that the agent can learn to select appropriate strategies to gain more rewards. 

\textcolor{black}{In what follows, we will study the spectrum allocation problem in the \gls{iab} architecture. To the best of our knowledge, we may be the first to investigate practical \gls{drl}-based spectrum allocation framework in an \gls{iab} setup. Our main contributions are as follows:}
{\color{black}
\begin{itemize}
\item We formulate the spectrum resource allocation problem in an IAB setup into a non-convex mix integer programming, the objective of which tries to maximize a generic network utility function.
\item We develop a novel model-free framework based on \gls{ddqn} and actor-critic techniques for dynamically allocating spectrum resource with system requirements. The proposed architecture is simple for implementation and does not require \gls{csi} related information. The training process can be performed off-line at a centralized controller, 
and the updating is required only when significant changes occur in \gls{iab} setup.
    \item We show that with proposed learning framework, the improvement of the existing policy can be achieved with guarantee, which yields a better sum log-rate performance. We evaluate the learning speed of two algorithms and show that actor-critic based learning applying policy gradient for Q-value maximization is more effective in convergence speed than value-based \gls{ddqn} when action-space is large.
    \item We show that the actor-critic based spectrum allocation scheme outperforms in system sum log-rate compared with the static allocation strategy when the network size increases. 
\end{itemize}
}

The remaining of this paper is organized as follows. We first formulate the spectrum allocation problem in \gls{iab} setup in Section II. To solve the optimization problem, we propose two DRL-based algorithms in Section III. Simulation results are presented in Section IV to demonstrate the effectiveness of the proposed approaches. We conclude the paper in Section V.

\section{System Model}\label{sec:sys_model}
Consider a \gls{dl} transmission two-tier \gls{iab} network, where \gls{mbs} $\mbs$ is located at the center of the network with \gls{iab} nodes deployed uniformly within the coverage area. We denote the set of IAB nodes as 
\textcolor{black}{$\mc B^- = \{\sbs \,\vert\, l =  1, 2,...,\Nsbs\}$}. Each \gls{iab} is equipped with two antennas: the receiving antenna at the \gls{mt} side for the wireless backhaul with \gls{mbs}, and the transmitting antenna at \gls{du} side for access  to serve its associated \gls{ue} groups. \gls{iab} nodes are assumed to be \gls{fd} capable of certain self-cancellation ability. \gls{fd} is an important topic in the \gls{iab} setup. Since our focus in the work is a general spectrum allocation framework, the research for the self-cancellation scheme is not in the scope of this work. Let 
\textcolor{black}{$\mc {B} = \{\mbs\} \cup \mc B^-$} denote the set of all base stations (BS) including the \gls{mbs} and all \gls{iab} nodes, where $\card {\mc{B}} = 1+ \Nsbs$. The total bandwidth in which each \gls{bs} can operate is divided into $\Nsb$ orthogonal sub-channels, denoted by $\mc {M} = \{1,2,...,\Nsb\}$. We assume that each \gls{bs} has a representative \gls{ue} group associated to it where all \gls{ue}s in this group are collocated. 
\textcolor{black}{For notation simplicity, we use similar denotation for UEs as in BSs. The UE group associated with DBS is denoted as $u_0$, while the set of UE groups associated with IAB nodes is denoted as $\mc U^- = \{u_l \vert l = 1,2,...,L\}$, where $u_l$ denotes the UE group that is associated with IAB node $b_l$. The complete UE group set is denoted as $\mc U = \{ u_0 \}\cup \mc U^-$. As often described in \gls{iab} networks, the \gls{iab} nodes are treated as \gls{ue}s to its associated parent, i.e., the \gls{mbs} in our setup. Thus, the first-tier receivers set is denoted as $\mc F_1 = \{u_0\} \cup \mc B^-$, while the second-tier receivers set is $\mc U^-$.}

\begin{figure}[t]
	\centering
	\hspace*{-.1in}\includegraphics[width=0.93 \linewidth]{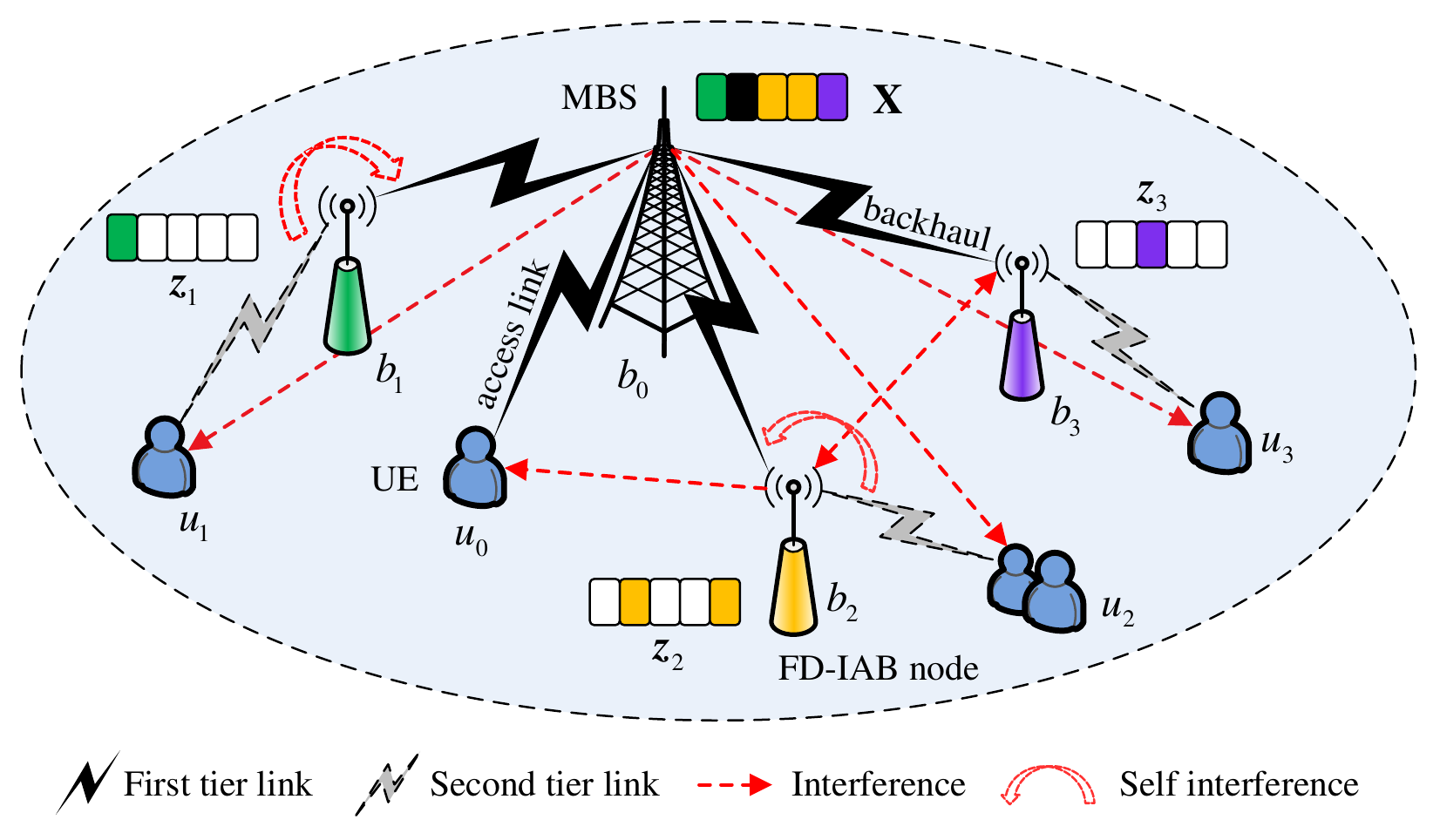}
	\caption{Integrated Access and Backhaul With shared Spectrum Resource.}
	\label{fig:iab}
\end{figure}

Note that each BS has $\Nsb$ spectrum resources for allocation, and we assume that access and backhaul links share the same pool of resource through $\Nsb$ orthogonal sub-channels. The spectrum resource of the \gls{iab} node is dedicated assigned to its associated \gls{ue} group. We denote the spectrum-allocation vector at the $i$-th \gls{iab} as $\sbscvec i = [\sbsc{i}{1},...,\sbsc{i}{M}]^T$, where $\sbsc{i}{m} \in \{0,1\}$, $\Nsbth \in \mc \Nsb$, $i \in \mc B^-$. When the $\Nsbth$-th sub-channel is used by the $i$-th \gls{iab} node, $z_{i}^{m} = 1$ otherwise $\sbsc{i}{m} = 0$. Differently, the spectrum resource at \gls{mbs} can be utilized either by its associated \gls{ue} group for access transmission or the corresponding associated \gls{iab} node for backhaul transmission. We denote the spectrum-allocation vector at \gls{mbs} for its $f$-th receiver as {\color{black}$\mbsvvec{f} = [\mbsv{f}{1},...,\mbsv{f}{M}]^T$}, where $\mbsv{f}{m} \in \{1,0\}$ and $f \in \mc F_1$. For example, if $\mbsv{u_0}{m} = 1$, the \gls{mbs} utilizes the $\Nsbth$-th sub-channel for access link of its associated \gls{ue} $u_0$, and for backhaul link to the $i$-th \gls{iab} node if $x_i^m = 1, i\in \mc B^-$. Thus, for notation simplicity, the allocation mappings is denoted as $\mbsmtx = [\mbsvvec{1}, ..., \mbsvvec{1+L}]^T$ and $\sbsmtx = [\sbscvec{1},...,\sbscvec{\Nsbs}]^T$ for the first-tier receivers and the second-tier receivers, respectively. We assume that \gls{mbs} utilizes all its spectrum resource in the required coverage area. 


Without loss of generality, we assume that several consecutive sub-carriers are grouped into one to form a spectrum sub-channel, and the channel fading is the same within one sub-channel and independent across different sub-channels. At a coherence time period, we denote the downlink channel gain from transmitter $i$ to receiver $j$ at the $\Nsbth$-th sub-channel as 
\begin{equation}
\chgain{i}{j}{m} = \pl{i}{j} \ch{i}{j}{m},
\end{equation}
where $i\in \mc B, j \in  \mc F_1 \cup \mc U^-$. For example, if $i=\mbs$ and $j \in \mc B^-$, $\chgain{i}{j}{m}$ represents the backhaul link channel gain between \gls{mbs} and \gls{iab} node. And $\ch{i}{j}{m}$ is the frequency dependent small-scale fading which undergoes Rayleigh fading, i.e., $\ch{i}{j}{m} \!\sim \text{exp}(1)$. $\pl {i}{j}$ represents the large-scale fading coefficient, and it is a function of distance between $i$ and $j$ including the pathloss and shadowing effect. 

During the transmission session between $i$ to $j$ at the $\Nsbth$-th sub-channel, the interference to the target receiver is introduced by other \gls{bs}s downlink transmissions on the same sub-channel. As \gls{mbs} the transmitter in the first-tier, the received \gls{sinr} of \gls{ue} $u_0$ and \gls{iab} node $b_l$ over the $\Nsbth$-th sub-channel is expressed as 
\begin{equation}
\sinr_{\mbs, \mue}^{m} = \frac{P_{\mbs} \chgain{\mbs}{\mue}{m} \mbsv{\mue}{m}}{\underbrace{ \displaystyle \sum_{i \in \mc B^-} \chgain{i}{\mue}{m} \sbsc{i}{m}}_{\text{co-tier interference}} + \sigma_{u}^2},
\end{equation}
and

\begin{equation}
\sinr_{\mbs, \sbs}^{m} = \frac{P_{\sbs} \chgain{\mbs}{\sbs}{m} \mbsv{\sue}{m}}
{\underbrace{ \displaystyle \sum_{i \in \mc B^- \setminus \sbs} \chgain{i}{\sbs}{m} \sbsc{i}{m}}_{\text{co-tier interference}} + \underbrace {P_{\sbs}\xi_{b_l}  \sbsc{b_l}{m}}_{\text{self-interference}} + \sigma_{\sbs}^2},
\end{equation}
where 
$\xi_{b_l}$ is the self-interference cancellation ability of \gls{iab} node $b_l$, and $\sigma_u^2$ and $\sigma_{\sbs}^2$ are noise power for the \gls{ue} and \gls{iab} node, respectively. 
The received \gls{sinr} for the second-tier \gls{ue} groups is
\begin{equation}
\sinr_{\sbs, \sue}^{m} = \frac{P_{\sbs} \chgain{\sbs}{\sue}{m} \sbsc{\Nsbsth}{\Nsbth}}
{\underbrace{\displaystyle \sum_{i \in \mc B^- \setminus \sbs} \chgain{i}{\sue}{m} \sbsc{i}{m} }_{\text{co-tier interference}} + \underbrace {\chgain{\mbs}{\sue}{m}  \displaystyle \sum_{j \in \mc F_1} \mbsv{j}{m} }_{\text{cross-tier interference}} + \sigma_{u}^2}.
\end{equation}

The instantaneous rate of $\mue$ is determined by the spectrum allocation matrix $\mbsmtx$ and $\sbsmtx$ that is expressed as
\begin{equation}
\rate{\mue}(\mbsmtx, \sbsmtx) = \displaystyle \sum_{m=1}^{\Nsb} \log (1+ \sinr_{\mbs, \mue}^{m} (\mbsmtx, \sbsmtx) ). \label{eq:rate_u0}
 \end{equation}
 Since the \gls{iab} node receives signal from \gls{mbs} and transmits to its associated \gls{ue} group, the instantaneous rate of the second-tier \gls{ue} $\sue$ is decided by the minimum value of the backhaul rate and the access rate, which is given by
\begin{equation}
\rate{\sue}(\mbsmtx, \sbsmtx) = \min(\rate{\mbs,\sbs}(\mbsmtx, \sbsmtx), \rate{\sbs, \sue}(\mbsmtx, \sbsmtx)), \label{eq:rate_ul}
 \end{equation}
where $\rate{\mbs,\sbs}$ represents the backhaul link rate from \gls{mbs} to \gls{iab} node $b_l$, and $\rate{\sbs,\sue}$ is the access link rate for corresponding associated \gls{ue} group of the IAB node $b_l$.


Based on the description above, we are interested in maximizing a generic network utility function $f(x)$ as expressed below

  \begin{subequations}
	\begin{align}
(\text{P}1):~\max_{\bm{X},\bm{Z}}~&\sum_{j \in \mathcal{U}}f(C_{j}(\bm{X},\bm{Z})),\label{eq:max_lp}\\
s.t. ~&C_{j}\geq \Omega_{j},~\forall j\in\mathcal{U};\label{eq:cons_capcity}\\
&\sum_{f\in\mathcal{F}_1}x_f^m=1,~m\in\mathcal{M};\label{eq:cons_v1}\\
&\sum_{f\in\mathcal{F}_1}\sum_{m\in\mathcal{M}}x_f^m \leq M;\label{eq:cons_v2}\\
&\sum_{m\in\mathcal{M}}z_i^m \leq M,~\forall{i}\in\mathcal{B}^-;\label{eq:cons_q1}\\
&x_f^m,z_i^m\in\{0,1 \},~\forall f\in\mathcal{F}_1,\forall i\in\mathcal{B}^-,\forall m\in\mathcal{M},\label{eq:cons_q2} 
	\end{align}	
\end{subequations}
where \eqref{eq:cons_capcity} is Quality of Service (QoS) requirement of each \gls{ue} group in the system, \eqref{eq:cons_v1} and \eqref{eq:cons_v2} are constraints for allocation vector of \gls{mbs}, such that each sub-channel can be only allocated to either access or backhaul transmission. \eqref{eq:cons_q1} and \eqref{eq:cons_q2} are constraints for the \gls{iab} node allocation vector that only $\Nsb$ sub-channels are available to employ. The spectrum allocation problem (P1) is non-convex mix integer programming, which has been shown to be NP-hard\cite{luo2008dynamic}. The solution of (P1) becomes intractable especially when the network size and available sub-channels increase. 

The function $f(x)$ can be designed to obtain a certain network utility. Since we assume that each \gls{bs} always has a \gls{ue} group associated to it. Since each \gls{ue} group requires to reach a certain rate level for transmission purpose at all time step, we consider the objective which aims at maximizing the global system rate. This is known as proportional fairness\cite{srikant2013communication}, and it yields $f(x) = \log(x)$.

To this end, the spectrum resource allocation problem of the \gls{iab} network is to design a spectrum allocation scheme for backhaul and access link transmission using the same resource pool. Spectrum resources expressed through binary variables $\mbsv{f}{m}$ and $\sbsc{i}{m}$ ($\forall f \in \mc F_1, \forall i \in \mc B^-$, $\forall\Nsbth \in \mc \Nsb$) are designed to maximize the network utility function as described in (P1).

\section{Reinforcement Learning in Resource Allocation}\label{sec:rru_reduction}
\subsection{Background of Reinforcement Learning}
The essential idea behind \gls{rl} is to learn good policy through trail-and-error interaction with a dynamic environment \cite{sutton1998introduction,kaelbling1996reinforcement}. \gls{rl} dates back to the early days of work in psychology, neuroscience and computer science. It receives massive attention recently in the machine learning communities. It has been widely used in decision making problems, mainly for the reason that it does not require prior knowledge of the system model and can be effectively adapted to stochastic transitions in the system. The \gls{rl} algorithm is initially designed to find the optimal policy with a fully observable Markovian environment. RL methods can also be applied in various settings such as non-Markovian environment and and is shown to have good performance.

Let $\mc S$ denote a set of possible states in the \gls{iab} network environment, and $\mc A$ denote a discrete set of actions. The problem can be modeled as an \gls{mdp}. The transition probability distribution is the set $\mc P: \mc S \times \mc A \times \mc S \to \mathbb R$, and the reward function is $r: \mc S \times \mc A \to \mathbb R$. Assuming discrete time steps, at each coherence time $t$, the central controller of the \gls{iab} network observes a state $\state {t} \in \mc S$, and takes an action $\actionv {t} \in \mc A$ according to a certain policy $\pi: \mc S \to \mc A$. The policy $\pi(\actionv{}\vert \state{})$ is the probability of taking action $\actionv{}$ conditioned on the current state $\state{}$. Therefore, the policy function must satisfy $\sum_{\action{}\in \mc A} \pi(\actionv{}\vert\state{}) = 1$. The controller, as the agent, receives an immediate reward $\reward {}$ and the environment evolves to next state $\nextstate$ with probability $P(\nextstate{} \vert \state{},\actionv{})$. Here let $R_t$ be the future cumulative discounted reward at time $t$ expressed as
\begin{equation}
   R_t = \displaystyle \sum_{\tau=0}^{\infty} \gamma^{\tau} \reward{t+\tau+1},
  \end{equation}
where $\gamma \in [0,1]$ is a discounted factor for future reward. The agent is only concerned with an immediate reward if $\gamma$ is set to $0$, which is considered as myopic. As $\gamma$ approaches $1$, the agent takes future reward into account more strongly. The objective of \gls{rl} agent is to find a policy $\pi$ that maximizes the expected accumulative discounted reward given the state-action pair $(\state{},\actionv{})$ at time $t$. This is also defined as Q-function in the classical Q-learning algorithm
\begin{equation}
	Q^{\pi}(\state{},\actionv{}) = \mathbb E_{\pi} \ls R_t\vert \state{t} =\state{}, \actionv{t} = \actionv{} \rs. \label{eq:Qfunc_expect}
\end{equation}

A commonly used off-line algorithm to find the optimal stochastic policy function $\pi^{*}(\actionv{}\vert\state{})$ takes greedy search, which is expressed as
\begin{equation}
	\pi^{*}(\actionv{}\vert\state{}) = \left\{
	\begin{aligned}
	1, & ~\text{if} ~ \action{} = 	 \arg \max_{a\in \mathcal{A}} ~ Q^{\pi}(\state{},\actionv{}); \\
	0, &~ \text {o.w}.
	\end{aligned}\right.\label{eq:policy_func}
\end{equation}
The policy function is written in $\pi(\state{})$ for deterministic transitions. Further analysis on policy improvement is discussed in Sec.~\ref{subsec:policy_improv}. It has been proved in \cite{tsitsiklis1994asynchronous} that Q-learning can effectively converge to the optimal policy with some iterative algorithms to achieve the optimal Q-value as
\begin{equation}
	Q^*(\state{},\actionv{}) =  r(\state{},\actionv{}) + \gamma \displaystyle \sum_{\state{} \in \mc S} P(\nextstate \vert \state{},\actionv{}) \, \underset {\nextaction}{\text{max}}~Q^*(\nextstate,\nextaction). \label{eq:Q_optim}
\end{equation} 

The classical Q-learning constructs a lookup table for all state-action pairs and updates the table at each step to approach optimal Q-value. The updating of Q-value for each state-action pair usually applies the $\epsilon$-greedy policy. With probability $\epsilon$ the agent picks a random action, and with probability $1-\epsilon$ the agent takes the action that gives the maximum Q-value at the current state. This is also referred as \emph{exploration} and \emph{exploitation} in \gls{rl}. Normally, at the beginning of the iteration, $\epsilon$ is set to $1$ to enable the agent to have more chances to explore the environment with different new actions, and then it decays with the iteration episodes. After obtaining new experience through interacting with the environment by taking action $\actionv{}$ at state $\state{}$, the agent transits to next state $\nextstate$ and updates the corresponding new Q-value 
\begin{align}
	Q(\state{},\actionv{})  \gets &  Q(\state{},\actionv{}) +  \nonumber \\
	& \beta \underbrace {\bigl( \reward{}(\state{},\actionv{}) + \gamma \underset {\nextaction}{\max}~Q(\nextstate,\nextaction) - Q(\state{},\actionv{}) \bigr)}_{\text{temporal difference error}}, \label{eq:Q_table}
\end{align}
where $\beta$ is the learning rate. And $\reward{}(\state{},\actionv{}) + \gamma   {\max}_{\nextaction}~Q(\nextstate,\nextaction)$ is the learned value for each step, the former of which is the immediate reward obtained from the environment after taking the action $\actionv{}$, and the latter is the estimated discounted maximum Q-value after transiting to the state $\nextstate$ by taking action $\actionv{}'$. The updating rule is called \gls{td} method since its changes are based on the estimated error between two different times. 

The above value-based Q-learning method performs well when dealing with small state and action space. However, it becomes computationally infeasible when the state space and action space are large, such as in the spectrum allocation problem at hand. This is due to that the storage of the lookup table in \eqref{eq:Q_table} for all state-action pairs is intolerable for large-scale problems. Besides, many states are rarely visited as the state space increases, which consequently leads to much longer time to converge to the optimal Q-value. To solve these two problems, deep Q-learning is proposed to improve the classical Q-learning method by applying the \gls{dnn} to approximate Q-function in lieu of the lookup table, which is called \gls{dqn}.

\subsection{DQN and Spectrum Allocation}\label{subsec:dqn}
In \gls{dqn}, the Q-function in \eqref{eq:Qfunc_expect} is approximated by a \gls{dnn} with a weight parameter $\thetaq$ as
\begin{equation}
    \tilde Q(\state{},\actionv{}; \thetaq ) \approx Q(\state{},\actionv{}).
\end{equation}
The essence of \gls{dqn} is to determine the weights $\thetaq$ for Q-function. Similar to Q-learning, at each step, the agent collects experience through interacting with the \gls{iab} network and stores this experience as the form $(\state{i},\actionv{i},\reward{i}, \nextstate_{i})$ in a data set $D$. Then \gls{dqn} updates its $\thetaq$ to minimize the loss {\color{black}by sampling a minibatch of size $N$ from data set $D$ as}
\begin{equation}
	\thetaq =  \arg\min_{\thetaq} ~ L(\thetaq) = \arg\min_{\thetaq} ~\displaystyle \sum_i^N \left(y_i-\tilde Q(\state{i},\actionv{i};\thetaq)\right)^2, \label{eq:loss_naturedqn}
\end{equation}
where 
\begin{equation} \label{eq13}
 y_i = \reward{i} +  \gamma \, \underset {\nextaction_i}{\max}~\tilde Q(\nextstate_i,\nextaction_i; \VEC{\theta}).
 \end{equation}
As \gls{dqn} aims to greatly improve and stabilize the training procedure of Q-learning, it applies two innovative mechanisms: \emph{experience replay} and \emph{periodically update target}. By applying experience replay mechanism, the \gls{dqn} randomly samples previous transitions in the data set $D$ to alleviate the problem of correlated data and non-stationary distributions \cite{DBLP:journals/corr/MnihKSGAWR13}. However, the operator $ {\max}_{\nextaction}~Q(\nextstate,\nextaction; \thetaq)$ in \eqref{eq13} uses the same values as that in evaluating an action, this may lead to overoptimistic value estimates. Therefore, as the "quasi-static target network" method implies in \cite{DBLP:journals/corr/HasseltGS15}, \gls{ddqn} can solve this issue by introducing two \gls{dqn}s: the target \gls{dqn} with weights $\thetatarget$ and the train \gls{dqn} with weights $\thetaeval$ in order to remove upward bias. Parameters $\thetaq^-$ are basically clones from the train network but only updated periodically according to the settings in hyper-parameter. The difference of loss functions between \eqref{eq:loss_naturedqn} and \gls{ddqn} is that the train Q-network is used to select actions while the target Q-network is used to evaluate actions. The target Q-value is calculated as
\begin{equation}
y_{\text{DDQN}} = \reward{}(\state{},\actionv{}) +  \gamma \, \tilde Q\bigl(\nextstate,\underbrace{  \arg\max_{\nextaction\in \mc A} ~\tilde Q(\nextstate,\nextaction; \thetaeval )}_{\text{select action} \, \nextaction}; \thetatarget
\bigr). \label{eq:y_update}
	\end{equation}
Finally, the stochastic gradient descent, which minimizes the loss in \eqref{eq:loss_naturedqn} by applying the target value in \eqref{eq:y_update} is used for training over the mini-batch data set $D$ at a central controller. 
\begin{algorithm}[t]
\caption{\strut Double \gls{dqn} training algorithm for Spectrum Allocation}\label{alg:ddqn}
\begin{algorithmic}[1]
\STATE{Start environment simulator for \gls{iab} network;}

\STATE{Initialize Memory data set $D$;}
\STATE{Initialize the target and the train \gls{dqn} $\thetatarget = \thetaq$;}
\FOR{episode $=1,2,...$}
\STATE{Receive the initial observed state $\state{1}$;}

\FOR{$t = 1:\Nsteps$}
\STATE{Select an action using $\epsilon$-greedy method from the train network $\tilde Q(\state{},\action{}; \thetaeval)$;}
\STATE{Obtains a reward according to Eq.~\eqref{eq:reward}, revolves to new state $\state{t+1}$;}
\STATE{Store transition pair $(\state{t},\action{t},\reward{t},\state{t+1})$ in data set $D$;}
\IF{training is TRUE}
{\color{black}\FOR{$i = 1:N$}
\STATE{Sample a random mini-batch of transitions $(\state{i},\actionv{i},\reward{i},\state{i+1})$ from data set $D$};
\STATE{Set $\{y_i\}$ according to \eqref{eq13}}; 
\ENDFOR
}
\STATE Update train network according to \eqref{eq:loss_naturedqn};
\STATE{Update target network weight $\thetatarget = \thetaeval$};
\ENDIF
\ENDFOR
\ENDFOR
\end{algorithmic}
\end{algorithm}

At each coherence time, the agent builds its state using collected information from its associated \gls{ue} and \gls{iab} nodes. To better control the transmission overhead, we restrict the state information to the \gls{qos} status of all \gls{ue}s. The center unit first computes the rate level for each \gls{ue} according to \eqref{eq:rate_u0} and \eqref{eq:rate_ul} at $t$-th step, and then obtains the state denoted as $\state{t} =\{s_{t,u_0}, s_{t,u_1}, ... , s_{t,u_{1+L}}\}$, where $s_{t,j} \in \{0,1\}$. And $s_{t,j}= 1$ indicates that the demand of the \gls{ue} group $(j\in\mathcal{U})$ has been satisfied with {\color{black}$\rate{t,j} \geq \Omega_{t,j}$} at time $t$, and $s_{t,j} = 0$ otherwise. 
In the simulation, we observe that the agent can learn accurately by using these \gls{qos} information without extra \gls{csi} feedback from \gls{iab} nodes or \gls{ue}s. 
With the goal of optimizing the spectrum allocation scheme to maximize the sum log-rate, we define the action as the corresponding allocation matrix for \gls{mbs} and \gls{iab} nodes as $\actionv{t} = \ls \mbsmtx_t,~ \sbsmtx_t \rs^T$. The action decision for channel allocation at \gls{iab} nodes is to evaluate the contribution of specific channels to the system objective. {\color{black}{For example, if the condition of channel $m$ is weak at node $f\in \mathcal{B}^-$, and contributes more to interfering neighbors than to its associated \gls{ue} signal strength, then this channel is very likely to be set to off at next state, i.e., $x_f^m=0$.}} The reward function is designed to optimize the network objective in (P1) for proportional fairness allocation. For all $0\leq t \leq \Nsteps$, we define the reward as
\begin{equation}
    \reward{t} =  \displaystyle \sum_{j\in \mc U} f \lp \rate{t,j}(\mbsmtx_t , \sbsmtx_t) \rp. \label{eq:reward}
\end{equation}
The \gls{dqn} is trained at the central controller in an off-line manner. The training procedure is stated in Algorithm~\ref{alg:ddqn}. Both target network and train network are initialized in the same manner. We use a memory replay buffer for storing transition samples and then fetch a mini-batch of transition samples from $D$ to train the deep Q-learning networks. Though there are many investigations regarding the sampling method to improve the efficiency and accuracy of the training process \cite{schaul2015prioritized,wang2015dueling}, we apply a random sampling method to take into account all possible experiences while interacting with the \gls{iab} network environment. Through step 14 in Algorithm 1, the weights $\theta^-$ for the target network get periodically updated. 

Although value-based \gls{dqn} learning has been demonstrated very powerful in many application areas, there are several limitations which have been addressed in \cite{sutton2000policy,lillicrap2015continuous}. For instance, as action space becomes large, the evaluation over $\vert \mc A \vert$ for action selection becomes intractable.
As the spectrum allocation problem stated in (P1), the action space of the \gls{mbs} is $(1+\Nsbs)^{\Nsb}$, and  each \gls{iab} has $2 ^ {M}$ actions. It is obvious that the total action space increases exponentially with the number of \gls{iab} nodes and available spectrum. The required training episodes also increase significantly. Hence to improve learning efficiency, we propose a actor-critic learning method that can be more effectively adapting to the size of \gls{iab} networks.

\begin{algorithm}[ht]
\caption{\strut Actor-Critic Spectrum Allocation Algorithm(ACSA)}\label{alg:acsa}
\begin{algorithmic}[1]
\STATE{Start environment simulator for \gls{iab} network;}

\STATE{Initialize Memory data set $D$;}
\STATE{Initialize target critic and actor network, $\thetatarget = \thetaq, \thetapolicy^- = \thetapolicy$;}
\FOR{episode $=1,2,...$ }
\STATE{Observed the initial state $\state{1}$;}
\FOR{$t = 1:\Nsteps$}
\STATE{Select action $\actionv{t}$ from actor network;}
\STATE{Execute action $\actionv{t}$ to the spectrum allocation matrix $\mbsmtx$ and $\sbsmtx$ and obtains reward $\reward{t}$;}
\STATE{Store transition pair $(\state{t},\action{t},\reward{t},\state{t+1})$ in data set $D$;}
\IF{training is TRUE}
{\color{black}\FOR{$i = 1:N$}
\STATE{Sample mini-batch from data set $D$;}
\STATE{Set $y_i = \reward{i} + \gamma \tilde Q(\nextstate_i,\tilde \pi(\nextstate_i;\thetapolicy^-);\thetaq^-)$;}
\STATE{Compute \gls{td} error $\delta_i = y_i - \tilde{Q}(\state{i},\action{i};\thetaq)$}
\ENDFOR
}
\STATE{Update $\thetaq$ for critic network according to \eqref{eq:loss_naturedqn}};
\STATE{Compute sampled policy gradient 
\begin{align}
    \nabla_{\omega}J\approx& \frac{1}{N}\sum_i \nabla_{\actionv{}}\tilde Q(\state{},\actionv{};\thetaq)\vert_{\state{}=\state{i},\actionv{}=\tilde \pi(\state{i};\thetapolicy)} \notag\\&\times \nabla_{\thetapolicy}\tilde \pi(\state{};\thetapolicy)\vert_{\state{}= \state{i}};\notag
\end{align}}
\STATE{Update $\omega$ for actor network by
\begin{align}
    \thetapolicy =& \thetapolicy +  \beta \nabla_{\thetapolicy} J;\notag
\end{align}}
\ENDIF
\STATE{Update the weights of the target networks 
\begin{align}
    \thetaq^- =& \tau \thetaq + (1-\tau)\thetaq^-;\notag\\
    \thetapolicy^- =& \tau \thetapolicy + (1-\tau)\thetapolicy^-;\notag
\end{align}}
\ENDFOR
\ENDFOR
\end{algorithmic}
\end{algorithm}

\begin{figure}[b]
	\centering
	\includegraphics[width=0.9 \linewidth]{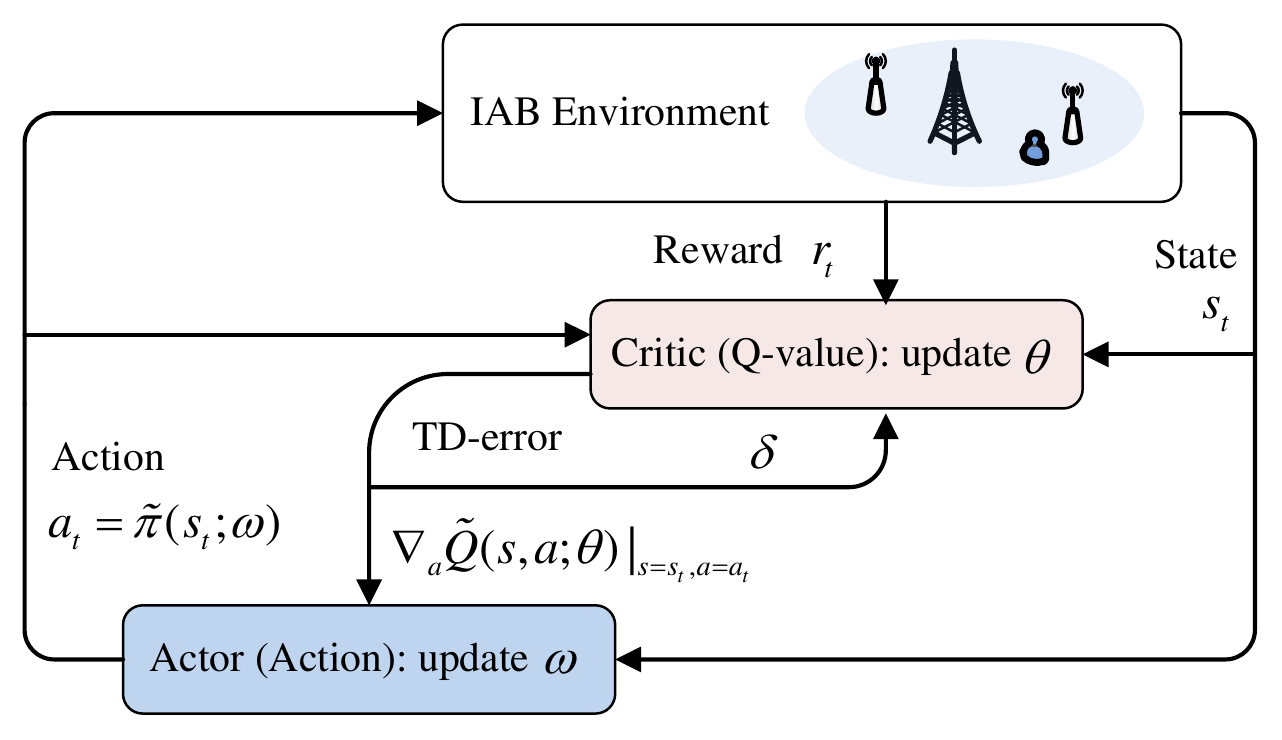}
	\caption{Actor-Critic Spectrum allocation architecture}
	\label{fig:ac_fig}
\end{figure}

\subsection{Actor-Critic Spectrum Allocation Method}
In this section we describe an Actor-Critic Spectrum Allocation (ACSA)-based framework for solving spectrum allocation problem (P1). Unlike the value-based \gls{dqn}, the ACSA model is premised on having two interacting modulus. As shown in Fig.~\ref{fig:ac_fig}, it has two aptly named components: an actor and a critic. The role of the actor is to take current environment state and output an action upon certain policy, which is essentially as normal \gls{dqn} does. The critic is responsible for evaluating how good the action is for the state through returning a score from \gls{dqn} by taking the action and current state as inputs. The critic is updated using \gls{td} errors similar as Q-network. The parameters of actor {\color{black}are} updated by the returning score to improve its performance towards actions that bring higher score in the future. In most actor-critic models, the policy function is modeled as a probability distribution over actions $\mc A$ given the current state and thus it is stochastic. This model has been widely used given support by the policy gradient theorem \cite{sutton2000policy,degris2012model}. However, the appealing simple form for deterministic policy gradient has been demonstrated to be more efficient than the usual stochastic policy gradient in \cite{silver2014deterministic}. Therefore, we model the policy of the spectrum allocation problem as a deterministic decision.
Following DDQN, the Q-function in \eqref{eq:Qfunc_expect} is approximated via \gls{dqn} with a weight vector $\thetaq$, while the actor network uses a similar approach which approximates the policy function with the parameter $\thetapolicy$ as $\tilde \pi(\state{};\thetapolicy)$. 
The goal of the actor network is to determine parameter $\thetapolicy$ in policy function $\tilde \pi(\state{};\thetapolicy)$, which may result in the largest increase of Q-value. According to \cite{silver2014deterministic}, the actor network can be updated by applying the chain rule to the expected cumulative reward $J$ with respect to the actor parameters $\thetapolicy$,
\begin{align}
        \nabla_{\thetapolicy} J  & \approx \mathbb E \ls  \nabla_{\thetapolicy}\tilde Q(\state{},\actionv{};\thetaq) \vert_{\state{}=\state{t},\actionv{}=\tilde \pi(\state{t},\thetapolicy)}\rs \nonumber \\
  & =  \mathbb E \ls \nabla_{\actionv{}}\tilde Q(\state{},\actionv{};\thetaq)\vert_{\state{}=\state{t},\actionv{}=\tilde \pi(\state{t};\thetapolicy)}  \nabla_{\thetapolicy}\tilde \pi(\state{};\thetapolicy)\vert_{\state{}= \state{t}}\rs. \label{eq:gradient}
\end{align}

We formally present the proposed actor-critic spectrum allocation method in Algorithm~\ref{alg:acsa}. Recall that \gls{ddqn} stabilizes the learning of Q-function by experience replay and the {\color{black}{frozen target network, i.e., the parameters of evaluating Q-network are copied to the target Q-network periodically.}} We similarly employ target networks for critic and actor with parameters $\thetaq^-$ and $\thetapolicy^-$ respectively. The target networks are clones of the train actor and critic networks, which are initialized in the same way as the train networks but updated with rate $\tau$ (step 18). In iteration $t$, the agent applies $\epsilon$-greedy exploration method in the actor network to obtain an action. Then the state-action pair is sent into critic train network for updating $\thetaq$.
After receiving the gradient with respect to action output by the critic, the train network for actor updates $\omega$.

\subsection{Improvement of baseline policy} \label{subsec:policy_improv}
In this subsection, we will investigate the improvement of the policy determined by ACSA, which may consequently achieve a better sum log-rate compared with baseline spectrum allocation strategy. 
Consider an \gls{iab} network that adopts a baseline policy $\policybase(\state{})$ which generates the state sequence  $\{\state{t}, \state{t+1},...\state{\Nsteps}\}$ from time $t$. 
We define the expected return under this policy $\policybase$ starting {\color{black}{from $\state{t} = \VEC s$}} as a value function 
\begin{equation}
    \mc V(\state{},\policybase) = \mathbb E\ls R_t \vert \state{t} = \state{} \rs.\label{eq:value_func}
\end{equation}
To show whether the policy $\policynew(\state{};\thetapolicy)$ obtained by ACSA is better than $\policybase$, we consider a special case where an action $\actionv{}^\dagger= \policynew(\state{};\thetapolicy) \neq \policybase(\state{})$ is selected at state $\state{}$. And then the agent follows the baseline policy. We denote $Q^{\policynew}(\state{},\policynew(\state{};\thetapolicy))$ as the value obtained with this behavior. If ACSA is a better strategy, it is natural to obtain $Q^{\policynew}(\state{},\policynew(\state{};\thetapolicy)) \geq \mc V(\state{},\policybase)$.
\begin{proposition}
If the policy $\policynew(\state{};\thetapolicy)$ of ACSA is no worse than an existing policy $\policybase$, it must obtain $\mc V(\state{},\policybase)\leq \mc V(\state{},\policynew)$.
\end{proposition}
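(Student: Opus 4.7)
The statement is a restatement of the classical policy improvement theorem (Sutton and Barto), so my plan is to follow that standard argument, adapted to the notation introduced in Section III. The starting assumption, extracted from the paragraph preceding the proposition, is
\begin{equation*}
    \mc V(\state{},\policybase) \;\leq\; Q^{\policynew}(\state{},\policynew(\state{};\thetapolicy)),
\end{equation*}
which is the hypothesis ``$\policynew$ is no worse than $\policybase$'' evaluated pointwise in $\state{}$.

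The plan is to unroll this one-step improvement into a full-trajectory improvement by induction on the horizon. First I would rewrite the right-hand side using the Bellman expansion of the $Q$-function, namely
\begin{equation*}
    Q^{\policynew}(\state{},\policynew(\state{};\thetapolicy)) = \mathbb E \ls r(\state{},\actionv{}^\dagger) + \gamma \, \mc V(\nextstate,\policybase) \,\vert\, \state{t}=\state{},\actionv{t}=\policynew(\state{};\thetapolicy)\rs,
\end{equation*}
since by construction the agent takes the action $\actionv{}^\dagger = \policynew(\state{};\thetapolicy)$ at the current step and then reverts to $\policybase$. Next I would apply the hypothesis once more at the successor state $\nextstate$, replacing $\mc V(\nextstate,\policybase)$ by $Q^{\policynew}(\nextstate,\policynew(\nextstate;\thetapolicy))$, and expand that $Q$-value again via the Bellman relation. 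Iterating this substitution $\Nsteps$ times produces a telescoping bound of the form
\begin{equation*}
    \mc V(\state{},\policybase) \leq \mathbb E_{\policynew}\!\ls \sum_{\tau=0}^{\Nsteps-1} \gamma^{\tau} \reward{t+\tau+1} + \gamma^{\Nsteps} \mc V(\state{t+\Nsteps},\policybase) \,\vert\, \state{t}=\state{}\rs,
\end{equation*}
where the expectation is now taken under trajectories generated entirely by $\policynew$ because every successive step has had its action replaced by the $\policynew$-action.

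The final step is to let $\Nsteps \to \infty$. Since $\gamma \in [0,1)$ and the rewards $\reward{t} = \sum_{j\in\mc U} f(\rate{t,j})$ are bounded on the feasible region defined by constraints \eqref{eq:cons_v1}--\eqref{eq:cons_q2}, the residual term $\gamma^{\Nsteps}\mc V(\state{t+\Nsteps},\policybase)$ vanishes and the infinite sum converges to $\mc V(\state{},\policynew)$ by the definition \eqref{eq:value_func}. Concatenating the inequalities yields $\mc V(\state{},\policybase) \leq \mc V(\state{},\policynew)$, which is the claim.

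The only mildly subtle step is the induction/telescoping argument: one must be careful that after substituting the hypothesis at $\nextstate$, the outer expectation is over transitions driven by $\policynew$ at every step, not $\policybase$, so that what accumulates is precisely the value function of $\policynew$. Boundedness of the per-step reward (needed for the tail term to vanish) is immediate from the structure of \eqref{eq:rate_u0}--\eqref{eq:rate_ul} together with finite transmit powers, and I would simply note this rather than formalize it. Apart from that, the proof is a standard Bellman-expansion argument and should occupy only a few lines in the paper.
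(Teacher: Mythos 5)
Your proposal is correct and follows essentially the same route as the paper: both start from the one-step improvement hypothesis $\mc V(\state{},\policybase)\leq Q(\state{},\policynew(\state{};\thetapolicy))$ (take the ACSA action once, then revert to $\policybase$), expand via the Bellman relation, re-apply the hypothesis at each successor state, and telescope until the accumulated discounted rewards become $\mc V(\state{},\policynew)$. The only difference is that you make the vanishing of the tail term $\gamma^{\Nsteps}\mc V(\state{t+\Nsteps},\policybase)$ explicit, whereas the paper leaves that limiting step implicit behind an ellipsis.
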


\begin{proof}
According to \eqref{eq:value_func} we derive
\begin{align}
  \mc V(\state{},\policybase)  & = \mathbb E\ls\displaystyle \sum_{\tau=0}^{\infty} \gamma^{\tau} \reward{}(\state{},\policybase(\state{})) \vert \state{t} = \state{} \rs\notag \\
    & = \mathbb E\ls \reward{}(\state{},\policybase(\state{}) + \displaystyle \sum_{\tau=0}^{\infty} \gamma^{\tau} \reward{}(\state{t+1},\policybase(\state{t+1}) \vert \state{t} = \state{} \rs\notag \\
    & = \mathbb E \ls \reward{}(\state{},\policybase(\state{}))\vert \state{t} = \state{} \rs +  \gamma \mathbb E \ls R_{t+1} \vert \state{t+1} = \nextstate \rs \notag\\
    & =  \mathbb E \ls \reward{}(\state{},\policybase(\state{}))\vert \state{t} = \state{} \rs + \gamma \mc V(\nextstate{}, \policybase).
\end{align}
Letting $\actionv{}^\dagger= \policynew(\state{};\thetapolicy)$ be the action taken at state $\state{t}$, we have 
	\color{black}
    \begin{align}
   	\mc V(\state{},\policybase)  \leq & Q^{\policybase}\left(\state{},\policynew(\state{}; \thetapolicy)\right) \notag\\
     = &\mathbb E \ls R_t\vert \state{t} =\state{}, \actionv{t} = \policynew(\state{};\thetapolicy)\rs\notag \\
	= & \mathbb E_{\bar{\pi}} \ls \reward{}(\state{},\pi^{\dagger}(\state{}; \omega))) + \gamma \mc V(\state{t+1},\bar{\pi}(\state{t+1})) \vert \state{t} =\state{}\rs\notag \\
	\leq & \mathbb E_{\bar {\pi}} \ls \reward{}(\state{},\pi^{\dagger}(\state{};\omega)) + \gamma Q^{\bar{\pi}}(\state{t+1},\pi^{\dagger}(\state{t+1})) \vert \state{t} =\state{}\rs\notag \\
	= & \mathbb E_{\bar {\pi}} [ \reward{}(\state{},\pi^{\dagger}(\state{};\omega))  + \gamma \mathbb E [ r(\VEC s_{t+1}, \pi ^{\dagger}) \notag \\
	& + \mc V(\VEC{s}_{t+2}, \bar {\pi}) \vert \VEC s_{t+1}, \VEC a_{t+1} = \pi^{\dagger}(\state{};\omega) ] \vert \state{t} = \state{}] \notag\\
	= & \mathbb E_{\bar {\pi}} [ \reward{}(\state{},\pi^{\dagger}(\state{};\omega)) +  \gamma r(\VEC s_{t+1}, \pi ^{\dagger})\notag\\&+ \gamma^2 \mc V(\state{t+2},\bar {\pi}) \vert \state{t} = \state{} ]\notag\\
	\leq & \mathbb E_{\bar {\pi}} [ \reward{}(\state{},\pi^{\dagger}(\state{};\omega)) + \gamma r(\VEC s_{t+1}, \pi ^{\dagger}) + \gamma^2 r(\VEC s_{t+2}, \pi ^{\dagger}) \notag \\
	& + \gamma ^3 \mc V(\state{t+3},\bar {\pi}) \vert \state{t} = \state{}] \notag\\
	 &\cdots \notag\\	
	\leq &  \mathbb E_{\bar {\pi}} [ \reward{}(\state{},\pi^{\dagger}(\state{};\omega)) + \gamma r(\VEC s_{t+1}, \pi ^{\dagger})+ \gamma^2 r(\VEC s_{t+2}, \pi ^{\dagger}) \notag \\
	& + \gamma ^3 r(\VEC s_{t+3}, \pi ^{\dagger}) + \cdots \vert \state{t} = \state{} ]\notag\\
	= &\mc V(\state{},\pi^{\dagger}),
\end{align}
which completes the proof.
\end{proof}
The proof can be extended to all states and possible actions for comparing policy of ACSA with the baseline policy. Since $\mc V (\state{},\policynew)$ represents the expected sum log-rate of IAB networks, a better rate performance can be guaranteed by ACSA compared with the baseline policy $\Bar{\pi}$ if Proposition 1 is satisfied.

\section{Simulation Results}\label{sec:result}
In this section, we present simulation results to demonstrate the performance of proposed methods for proportional fairness spectrum allocation in different \gls{iab} scenarios. 
\subsection{Simulation Setup}
We consider an \gls{iab} network consisting of one \gls{mbs} at the center, and $\Nsbs$ \gls{iab} nodes which follow the Poisson point process (PPP) deployed at the radius of $250$-meter from \gls{mbs}. \gls{ue} groups are located at the radius of $150$-meter from its associated \gls{bs}, the initial locations of which follow PPP. We adopt random walk model for simulating mobility of \gls{ue}s in the IAB networks. The moving speed of \gls{ue} $u_j$ at time $t$ follows uniform distribution, i.e., $\nu_j \!\sim \mathcal U(0,2)$ m/s, while the moving angel follows $\psi_j \!\sim \mathcal U(0,2\pi)$. The distance-based path loss is characterized by the \gls{los} model for urban environments at $2.4$ Ghz, and is compliant with the LTE standard \cite{access2011radio}. The simulator initializes an \gls{iab} network setup according to the parameters shown in Table.~\ref{table:sim}. We consider the proportional fairness schedule which aims to maximize $\sum_{j\in\mc U} \log(\rate{j})$. The required rate is unique among all UEs, where {\color{black}$\Omega_{j} = 5,\forall j\in \mc U$}.\\

We next {\color{black}describe} the hyper-parameters used in our \gls{ddqn} and ACSA methods. To better compare two algorithms, we apply the same network structure in \gls{ddqn} and in critic of ACSA, which composes of three-hidden layers of fully-connected neural networks with $500,1000,500$ neurons. This structure is also applied to the actor network in ACSA. We found that employing different number of neurons at each hidden layer renders similar rate performance for the \gls{iab} network with small size and few sub-channels. However, the stability significantly degrades as the action space increases when few neurons are adopted. In the ACSA, the third-hidden layer is set as a customize activation function to fulfill the constraint condition in problem (P1). The softmax function is applied on each column of allocation matrix $\mbsmtx$ for \gls{mbs} since the constraint of the sum probability for allocating one typical channel equals to one. Correspondingly, this is to select the best "candidate" for each channel to get a higher score in critic networks. The sigmoid function is applied on each row of $\sbsmtx$ for the \gls{iab} node to select the best "combination" of sub-channels for allocation. The discount rate is $\gamma = 0.9$ and the exploration rate is $\epsilon = 0.9$ with a decay rate of $0.9995$. The learning rate is set to be the same for both algorithms as $\beta = 10^{-4}$. The soft update rate in ACSA model is $\tau = 0.01$.

The input of the proposed models is the observed state from the environment where $\state{t}$. The output of \gls{ddqn} is a single action for spectrum allocation. This action decision is converted to the mapping matrix $\ls\mbsmtx, \sbsmtx\rs^T$ by a post-process. The output of the ACSA is the processed action, which can be directly applied to the allocation vectors. Thus, the output action $\actionv{}$ is with size
$(2 \Nsbs +1)\times \Nsb$, where the first $(\Nsbs  +1)\times \Nsb$ is the spectrum allocation $\mbsmtx$ for \gls{mbs}, and the next $\Nsbs\times \Nsb$ is the allocation $\sbsmtx$ for all \gls{iab} nodes. 



\begin{table}[t]
\centering
\caption{Simulator Setup}
\label{table:sim}
\scalebox{0.82}{\begin{tabular}[t]{c|l}
Parameter & Value \\
\hline
Carrier Frequency & $2.4$ Ghz \\
Bandwidth $W$ & $20$ Mhz \\
\gls{mbs} Pathloss & $34+40\log(d)$ \\
\gls{iab} Pathloss & $37+30\log(d)$ \\
Subchannel & Rayleigh Fading \\
Transmit power at \gls{mbs} & $43$ dBm \\
Transmit power at \gls{iab} & $33$ dBm \\
Self-interference & $-70$dB \\
Spectral Noise & $-174 \text{dBm/Hz} + 10\log(W) + 10\text{dB}$ \\
\end{tabular}}
\end{table}

\subsection{Learning to Improve Fariness Schedule}


\begin{figure} [b] 
	 \vskip 0.2in
	\begin{center}
		\centerline{\includegraphics[width=88mm]{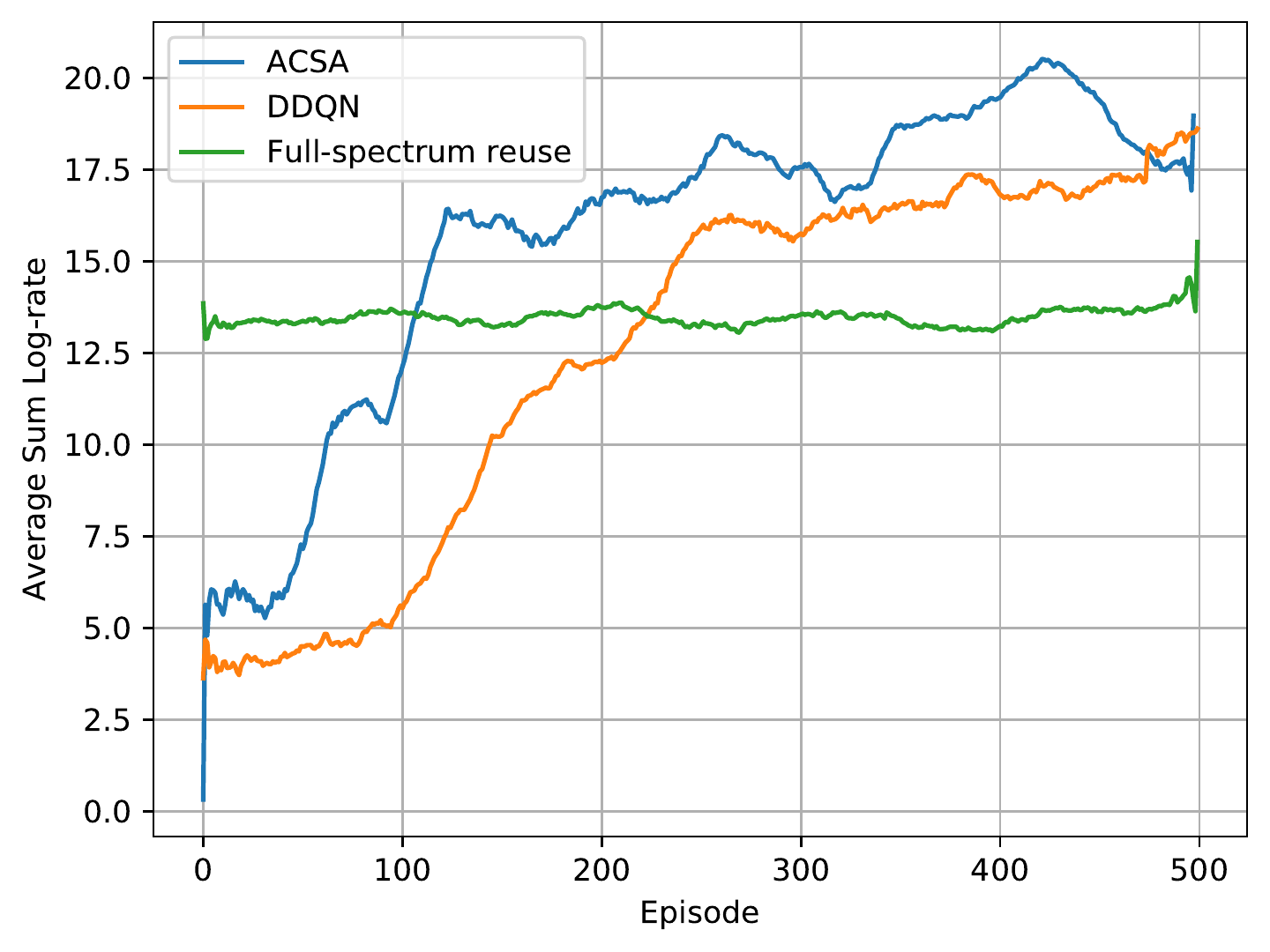}}
		\caption{Average sum log-rate for $L = 4, M = 10$}
		\label{fig:reward}
	\end{center}
	 \vskip -0.2in
\end{figure}

We present a typical result with $\Nsbs = 4$ \gls{iab} nodes, which are deployed around \gls{mbs}. The available sub-channels for resource allocation at each \gls{bs} are $\Nsb = 10$. As illustrated in Fig.~\ref{fig:reward}, we compare the sum log-rate for the proposed \gls{ddqn} and ACSA algorithms with the full-spectrum reuse strategy. The results show that both proposed methods can effectively learn from interacting with the time-varying environment and achieve better log sum-rate performance than the full-spectrum reuse strategy. It shows that reusing full-spectrum can not guarantee the optimal log sum-rate for \gls{iab} networks due to the severe interference introduced by turning on all sub-channels at \gls{iab} nodes and MBS. This interference not only affects co-tier \gls{iab} nodes during backhaul transmission, but also affects the \gls{ue} at \gls{mbs}. In addition, the convergence speed of ACSA exceeds that of \gls{ddqn}. This shows that the ACSA is more suitable for solving the problem (P1) with IAB networks. For the proposed methods, it can be seen that their performance keeps low in the beginning of learning phase. This is because the agent is exploring the environment with trial-and-error experiments. After around $120$ episodes for the ACSA and $200$ episodes for the \gls{ddqn}, the performance of both methods intend to increase and become relatively stable. However, due to the mobility of UEs, the sum log-rate of all methods fluctuates. Unlike the static full-reuse strategy, the \gls{drl}-based methods can be trained to dynamically allocate limited spectrum resource according to the current link conditions in the IAB network, which hence achieve a better sum log-rate performance \cite{munk2016learning}. 


\begin{figure} [t] 
	 \vskip 0.2in
	\begin{center}
		\centerline{\includegraphics[width=86mm]{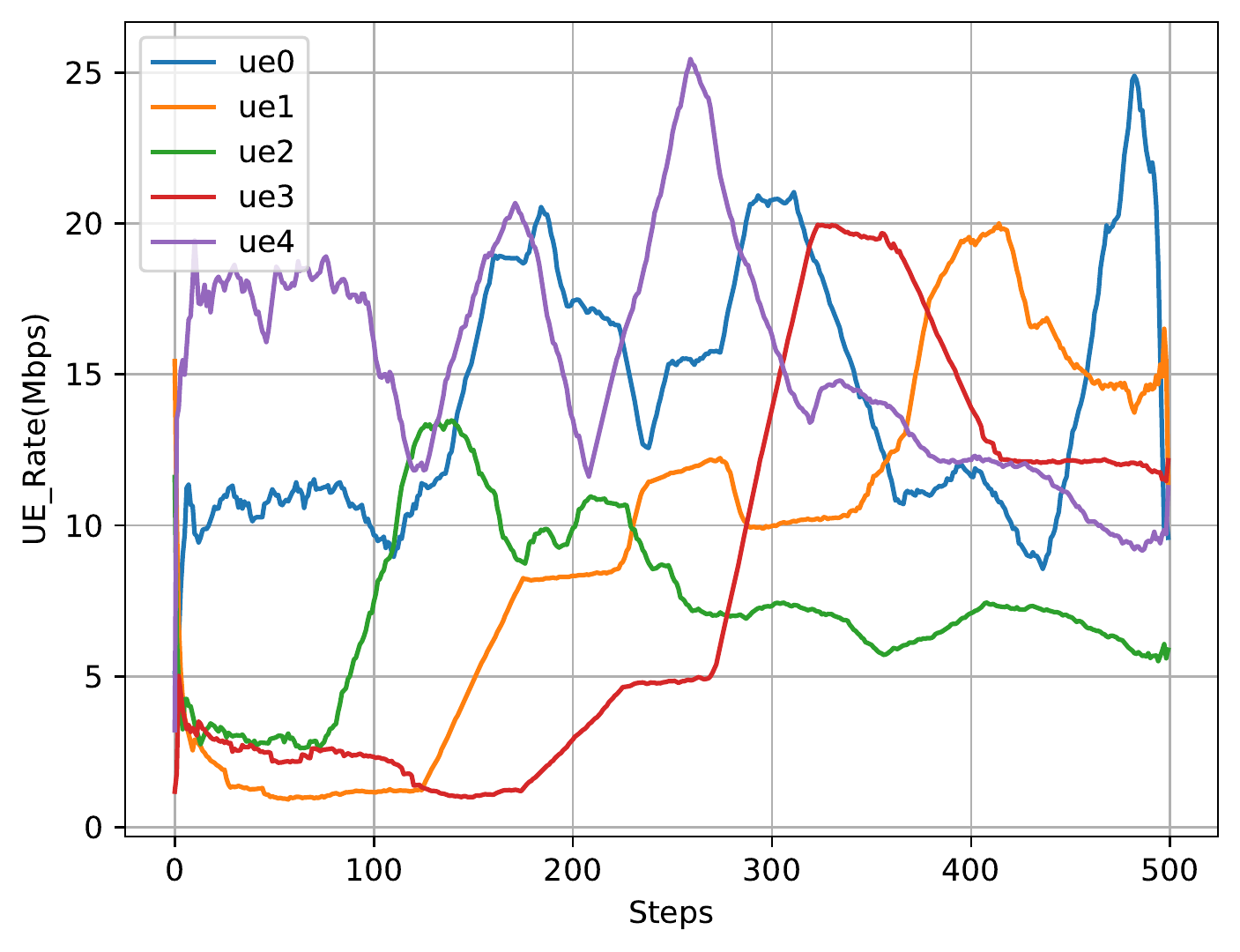}}
		\caption{Average rate of each \gls{ue} applying ACSA}
		\label{fig:uerate}
	\end{center}
	 \vskip -0.2in
\end{figure} 

Fig.~\ref{fig:uerate} illustrates one instance of the average rate of individual \gls{ue} using ACSA method. The log sum-rate is relatively low in the beginning since only two \gls{ue}s achieve rate requirements. This is possible since at some specific time periods, the \gls{ue}s that locate close to neighbouring \gls{bs}s have worse link condition than other \gls{ue}s. Then the agent learns to adapt the allocation strategy to improve the log-sum rate. It can be seen that the rates of UE1, UE2 and UE3 increase rapidly after $100$ steps. After around $300$ steps, all \gls{ue}s have achieved their system requirement. However, with increasing the rates of UE1 and UE3, the rate of UE2 is slightly decreased. This is due to that the system has reached an equilibrium allocation solution, where no \gls{ue} can improve their individual rate without decreasing the rate of at least one other \gls{ue}. 





\begin{figure} [t] 
	 \vskip 0.2 in
	\begin{center}
		\centerline{\includegraphics[width=82mm]{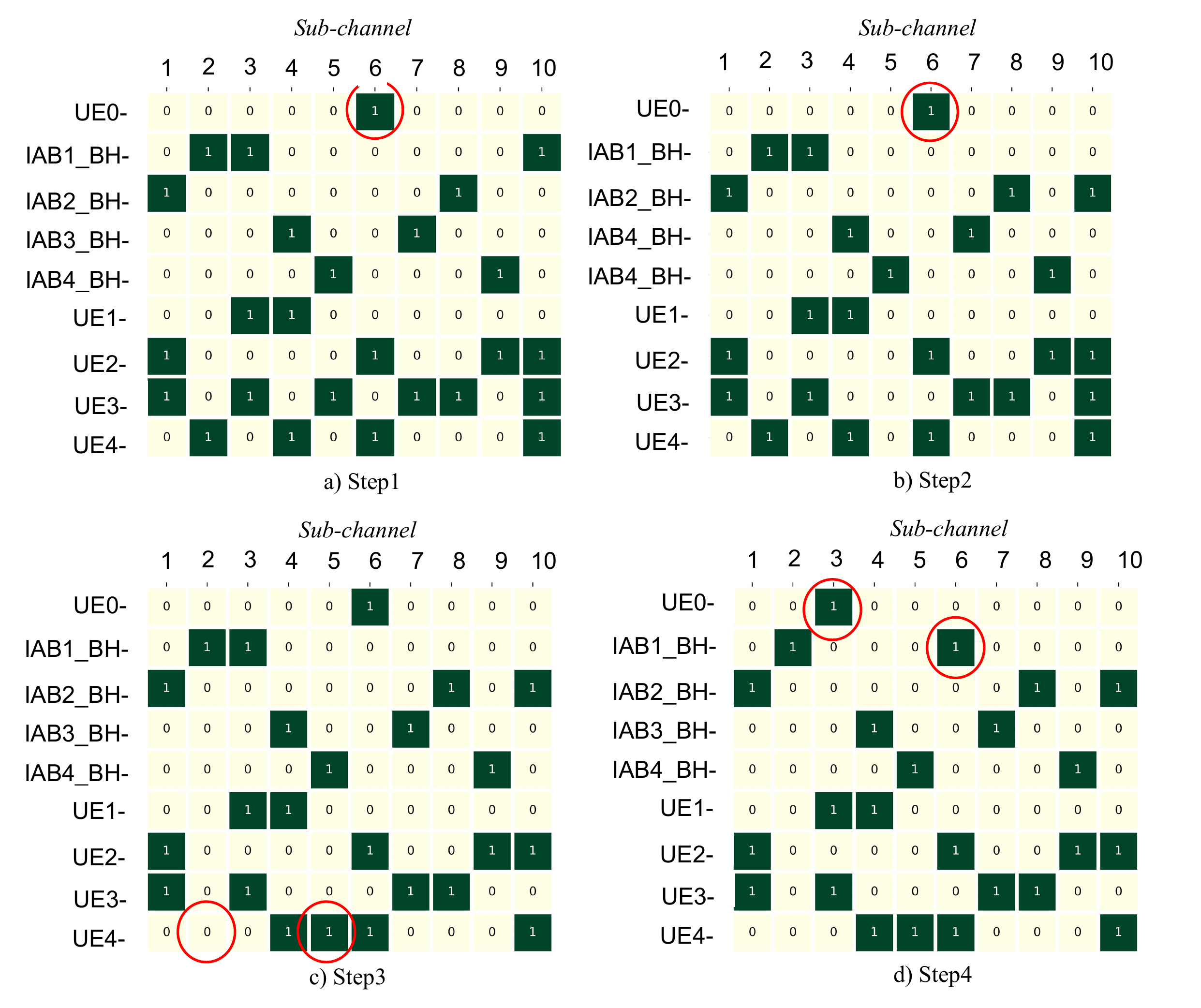}}
		\caption{Spectrum allocation in 4 snapshots}
		\label{fig:alloc}
	\end{center}
	 \vskip -0.2in
\end{figure}

\begin{figure} [t] 
	 \vskip 0.2in
	\begin{center}
		\centerline{\includegraphics[width=90mm]{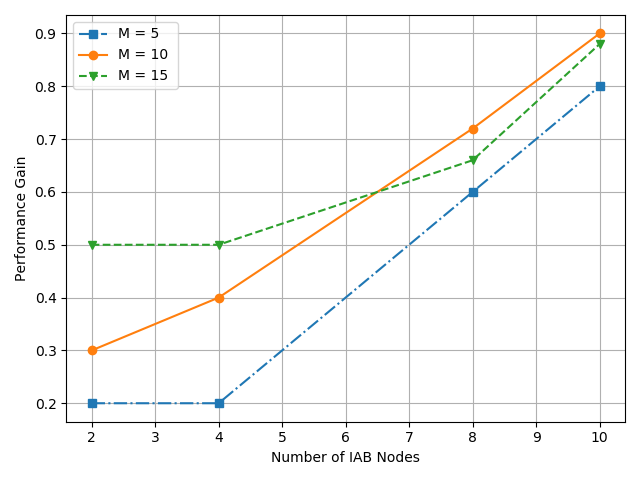}}
		\caption{{\color{black}Average performance gain using ACSA with different number of IAB nodes}}
		\label{fig:gain}
	\end{center}
	 \vskip -0.2in
\end{figure}

To better understand the learning process of \gls{drl} methods, we show how allocation decisions are made in consecutive $4$ time steps in one test case, {\color{black}which is presented} in Fig.~\ref{fig:alloc}. It can be seen that the $6$-th sub-channel is allocated for UE0 associated with MBS at the first 3 time steps. However, at the last time-step, the $3$-rd sub-channel is allocated to UE0 instead of sub-channel $6$. For all the 4 steps, UE0 experiences cross-tier interference from other two transmitting \gls{iab} nodes. {\color{black}As numerical result shown at this test case, as UE0 moves dynamically in the environment, the channel gain between UE0 and interfering IAB nodes $g_{i,u_0}, i\in \mc B^-$ varies. From step 3 to step 4, the interference introduces by IAB node1 and node3 decreases while by which IAB node2 and node4 increases. To maximize the expected cumulative sum log-rate in the system, the optimal resource mapping at step4 is to allocate sub-channel 3 to UE0 and allocate sub-channel 6 to IAB node1 in order to guarantee the access link rate for UE0 as well as backhaul link rate for IAB node1.}
 
Besides, we evaluate the {\color{black}{performance gain} of ACSA to a fixed policy by {\color{black}simulating the average sum log-rates over $500$ episodes with different number of \gls{iab} nodes.} The performance gain is defined as $Gain = (\sum_{j\in \mc U} f^{\text{ACSA}}(C_j)-\sum_{j\in \mc U} f^{\text{fixed}}(C_j))/\sum_{j\in \mc U} f^{\text{fixed}}(C_j)$}.  We choose 4 representative scenarios with $L = 2,4,8,10$. The results in Fig. 6 show that as we increase the number of \gls{iab} nodes, the proposed \gls{drl} method can outperform the fixed allocation strategy. Since with increasing \gls{iab} nodes, the performance degradation of the static strategy  dominates that of the proposed method, the performance gain increases with $L$. As more spectrum is available in the network, the proposed method shows to perform better in interference mitigation than the static allocation method. 

\section{Conclusions}
We investigate the spectrum allocation in \gls{iab} networks by characterizing the optimization problem to maximize the sum log-rate while guaranteeing \gls{ue}s system requirements. Specifically, we propose two \gls{drl}-based algorithms to solve the optimization problem at hand. We show that both algorithms can learn to adapt allocation strategy for backhaul and access transmission with dynamic network settings. Moreover, the proposed ACSA is {\color{black}{more efficient in convergence speed compared with \gls{ddqn}}}, and it achieves considerable performance gain in system sum log-rate comparing to static strategy especially when the size of IAB networks gets larger. Simulation results are given to show promising potentials and better sum log-rate performance of proposed methods compared with the traditional full-spectrum reuse strategy.

\bibliography{Reflibb}
\bibliographystyle{IEEEtran}
\end{document}